\theoremstyle{plain}
\newtheorem{theorem}{Theorem}[section]
\newtheorem{definition}[theorem]{Definition}
\title[Dual futile cycle]{A proof of bistability for the dual futile cycle}
\author{Juliette Hell}
\address[Juliette Hell]{Institut f\"ur Mathematik\\ 
Freie Universit\"at Berlin \\
Arnimallee 3, D--14195 Berlin, Germany}
\email[Juliette Hell]{jhell@zedat.fu-berlin.de}
\author{Alan D. Rendall}
\address[Alan D. Rendall]{Institut f\"ur Mathematik\\
Johannes Gutenberg-Universit\"at\\
Staudingerweg 9, D-55099 Mainz, Germany}
\email[Alan Rendall]{rendall@uni-mainz.de}
\begin{document}
\begin{abstract}
The multiple futile cycle is an important building block in networks 
of chemical reactions arising in molecular biology. A typical process 
which it describes is the addition of $n$ phosphate groups to a protein.
It can be modelled by a system of ordinary differential equations 
depending on parameters. The special case $n=2$ is called the dual futile 
cycle. The main result of this paper is a proof that there are parameter
values for which the system of ODE describing the dual futile cycle has
two distinct stable stationary solutions. The proof is based on bifurcation 
theory and geometric singular perturbation theory. An important entity
built of three coupled multiple futile cycles is the MAPK cascade. It
is explained how the ideas used to prove bistability for the dual futile
cycle might help to prove the existence of periodic solutions for the 
MAPK cascade.
\end{abstract}

\maketitle

\section{Introduction}

A pattern of chemical reactions frequently encountered in cell biology
is one where $n$ phosphate groups are attached to a protein by reactions
catalysed by one enzyme $E$ (a kinase) and removed again by reactions catalysed 
by another enzyme $F$ (a phosphatase). This is sometimes called a multiple 
futile cycle. An introduction to this type of biological system and how it can 
be modelled mathematically using ordinary differential equations (ODE) 
depending on parameters can be found in \cite{wangsontag08a}. It is proved in 
\cite{wangsontag08a} that for a $n$-fold futile cycle, $n\ge 2$, this system of ODE exhibits 
multistationarity for certain values of the parameters, i.e. that there exist 
several different stationary solutions. Upper and lower bounds for the number 
of stationary solutions as a function of $n$ were also proved in 
\cite{wangsontag08a}. For $n=1$ there is a unique stationary solution and it is 
globally asymptotically stable \cite{angeli06}. The results which follow 
concern the case $n=2$ of this system, the dual futile cycle. In that case 
the maximal number of stationary solutions for any values of the parameters is
three. Note for comparison that in the case $n=3$ while the results of 
\cite{wangsontag08a} only guarantee that the maximal number of stationary 
solutions is between three and five it was recently shown in 
\cite{flockerzi13} that the upper bound is sharp. In other words, for $n=3$ 
there are parameter values for which there exist five stationary solutions.

Beyond the question of the number of stationary solutions it is of great 
interest to obtain information about their stability. This allows conclusions 
about the significance of the stationary solutions for the dynamics of more
general solutions of the system. It is of particular interest to know whether
there exist more than one stable stationary solution for fixed values of the 
parameters. This phenomenon is called bistability. (In ODE describing chemical 
systems there are often preferred invariant affine subspaces, the 
stoichiometric compatibility classes. When talking about fixed parameters it 
is understood that the stoichiometric compatibility class has been 
fixed.)  In  \cite{markevich04} it was concluded using numerical and heuristic 
approaches that there is bistability in the dual futile cycle. To the authors'
knowledge there is no rigorous and purely analytical proof of this 
statement in the literature. The main result of the present paper is a proof
of this type.  

In \cite{wangsontag08a} and \cite{markevich04} the reactions are modelled using
a standard Michaelis-Menten scheme for the catalysis of each reaction and mass
action kinetics for the elementary reactions involved. The resulting system will
be called the MM-MA system (Michaelis-Menten via mass action) in what follows
and it is the system of principal interest here. It is possible, via a
quasistationarity assumption of Michaelis-Menten type, to pass formally to a 
smaller system, called the MM system (Michaelis-Menten) in what follows. The
question of bistability in the latter system has been studied by methods which
are partly numerical and heuristic in \cite{ortega06}.

The strategy used in what follows is to first give a rigorous analytical proof
of bistability in the MM system using bifurcation theory. It is shown that 
there is a generic cusp bifurcation and this implies bistability by 
well-known methods \cite{kuznetsov10}. Then bistability is concluded for the 
MM-MA system with the help of geometric singular perturbation theory
\cite{fenichel79} which gives control over the limiting process from the MM-MA 
system to the MM system and can be thought of as a far-reaching generalization 
of earlier work of Tikhonov (see \cite{wasow65}, Sect. 39) and Hoppensteadt 
\cite{hoppensteadt66}. The relevance of this type of result to the 
quasistationary approximation was pointed out in \cite{heineken67}. A similar 
strategy has been used in \cite{wangsontag08b} to prove that generic solutions
of the system describing the dual futile cycle converge to stationary 
solutions. 

The paper is organized as follows. In the next section the basic equations of 
the model studied in the paper are explained. Sect. \ref{techniques} is a 
concise introduction to some of the main mathematical tools used. In Sect.
\ref{bistabmm} bistability is proved for the Michaelis-Menten system and in
Sect. \ref{bistabma} this is used to obtain a corresponding result for 
the full system. In Sect. \ref{further} some directions in which this 
research could be extended are indicated. These concern the MAPK cascade.
In particular it is discussed how Michaelis-Menten reduction can be applied
in that case and an explicit reduced system is presented.

\section{The basic equations}\label{basic}

Consider a chemical system consisting of a protein $Y$ and the substances $YP$
and $YPP$ obtained by attaching one or two phosphate groups to $Y$. The 
reactions which attach phosphate groups are catalysed by an enzyme $E$ and 
those which remove phosphate groups by an enzyme $F$. It is assumed that the
enzyme $E$ can only add one phosphate group before releasing the substrate. 
This is what is called distributive phosphorylation in contrast to processive 
phosphorylation where more than one phosphate is added during one encounter 
between the enzyme and its substrate. It is also assumed that the phosphate 
groups are added at binding sites in a certain order. This is called 
sequential phosphorylation. It is assumed that dephosphorylation by $F$ has 
corresponding properties and that phosphate groups are removed in the reverse 
order to that in which they are added. These assumptions about the nature of 
the phosphorylation and dephosphorylation processes are common in modelling 
approaches in the literature. Adopting them, the chemical reactions we are 
modelling can be written in the following form where the label $k_i$ denotes 
the reaction constant of reaction $i$. 
\begin{equation*}
\xymatrix{
Y+E \ar@/^1pc/[r]^{k_1}
&  YE \ar@/^1pc/[l]^{k_2} \ar[r]^{k_3}& YP+E \\
&&\\
YP+E \ar@/^1pc/[r]^{k_4} & YPE\ar@/^1pc/[l]^{k_5} \ar[r]^{k_6} & YPP+E\\
&&\\
YPP+F \ar@/^1pc/[r]^{k_7} & YPPF \ar@/^1pc/[l]^{k_8} \ar[r]^{k_9}& YP+F\\
&&\\
YP+F \ar@/^1pc/[r]^{k_{10}} & YPF\ar@/^1pc/[l]^{k_{11}} \ar[r]^{k_{12}} & Y+F
}
\end{equation*}

An important biological example is the case
where $Y$ is the extracellular signal-regulated kinase (ERK), $E$ is the 
MAPK/ERK kinase (MEK) and $F$ is MAPK phosphatase 3 (MKP3). This is part of a 
mitogen activated protein kinase (MAPK)
cascade, a pattern of chemical reactions of key significance in molecular 
biology. Some experimental results on this system are summarized in 
\cite{markevich04} and are as follows. Both MEK and MKP3 act in a distributive 
fashion. Dephosphorylation by MKP3 is sequential but phosphorylation by MEK 
has a random component - one order of phosphorylation predominates but the 
other also occurs. There is some discussion in \cite{markevich04} of the 
effects of this fact on the dynamics. In what follows the analysis is 
restricted to the case where both $E$ and $F$ act sequentially. It is of 
interest to note that if distributive phosphorylation is replaced by 
processive phosphorylation for one or both of the enzymes in this model then 
the existence of more than one stationary solution, and in particular 
bistability, can be ruled out \cite{conradi05}.  

The concentration of a substance $Z$ is denoted by 
$x_Z$. If mass action kinetics are imposed then the 
assumptions made up to this point uniquely determine the evolution equations
for the concentrations of the substances taking part in the reactions. These 
are the free substrates $Y$, $YP$ and $YPP$, the free enzymes $E$ and $F$ and 
the substrate-enzyme complexes $YE$, $YPE$, $YPPF$ and $YPF$. They satisfy the 
following system of nine ordinary differential equations:
\begin{eqnarray}
&&\frac{d x_{Y}}{dt}=-k_1 x_{Y}x_{E}+k_2 x_{YE}+k_{12} x_{YPF},\label{dfc1}\\
&&\frac{d x_{YP}}{dt}=-k_4 x_{YP}x_E-k_{10} x_{YP}x_F+k_3 x_{YE}\label{dfc2}\\
&&+k_5 x_{YPE}+k_9x_{YPPF}+k_{11}x_{YPF},\nonumber\\
&&\frac{d x_{YPP}}{dt}=-k_7 x_{YPP}x_F+k_6x_{YPE}+k_8 x_{YPPF},\label{dfc3}\\
&&\frac{d x_{YE}}{dt}=k_1x_{Y}x_{E}-(k_2+k_3) x_{YE},\label{dfc4}\\
&&\frac{d x_{YPE}}{dt}=k_4 x_{YP}x_E-(k_5+k_6) x_{YPE},\label{dfc5}\\
&&\frac{d x_{YPF}}{dt}=k_{10}x_{YP}x_F-(k_{11}+k_{12})x_{YPF},\label{dfc6}\\
&&\frac{d x_{YPPF}}{dt}=k_7 x_{YPP}x_F-(k_8+k_9) x_{YPPF},\label{dfc7}\\
&&\frac{d x_{E}}{dt}=-k_1 x_{Y}x_{E}-k_4x_{YP}x_E+(k_2+k_3)x_{YE}
+(k_5+k_6)x_{YPE},\label{dfc8}\\
&&\frac{d x_{F}}{dt}=-k_7 x_{YPP}x_F-k_{10}x_{YP}x_F\label{dfc9}\\
&&+(k_8+k_9)x_{YPPF}+(k_{11}+k_{12})x_{YPF}.\nonumber
\end{eqnarray}
The reaction constants $k_i$ are positive numbers. It should be noted that 
apart from the differences in notation these equations are identical to those 
in \cite{markevich04} and \cite{wangsontag08a}. Let
\begin{eqnarray}
&&\bar E=x_E+x_{YE}+x_{YPE},\label{totalE}\\
&&\bar F=x_F+x_{YPF}+x_{YPPF},\label{totalF}\\
&&\bar Y=x_Y+x_{YP}+x_{YPP}+x_{YE}+x_{YPE}+x_{YPF}+x_{YPPF}\label{totalY}.
\end{eqnarray}
These quantities, which are the total concentrations of the enzymes and of
the substrate, are conserved under the evolution. Thus $x_E$ and $x_F$ can be 
expressed in terms of $\bar E$ and $\bar F$ and the concentrations of their 
complexes. In a similar way, $x_{YP}$ can be expressed in terms of $\bar Y$,
$x_{Y}$, $x_{YPP}$ and the concentrations of the complexes involving $Y$. It is 
thus possible to discard the evolution equations (\ref{dfc2}), (\ref{dfc8}) and 
(\ref{dfc9}) and reduce the number of equations from nine to six. The notation
$x_{YP}$ should now be thought of as an abbreviation for the expression of 
this quantity in terms of $\bar Y$ and the unknowns in the remaining evolution
equations. The six equations depend on the three parameters $\bar E$, $\bar F$ and 
$\bar Y$. If a solution of the system of nine equations is given, a solution
of the system of six equations with certain values of the parameters  is obtained. Conversely, given a solution of
the system of six equations and a choice of the constants $\bar E$, $\bar F$ 
and $\bar Y$ a solution of the set of nine equations can be obtained. In order 
to ensure that the latter is positive the constants $\bar E$, $\bar F$ and 
$\bar Y$ must be chosen sufficiently large.  

Next we want to pass to the Michaelis-Menten limit. This can be done by 
introducing rescaled variables. If $Z$ is the concentration of a free enzyme 
or that of a substrate-enzyme complex let $x_Z=\epsilon\tilde x_Z$. Let 
$\bar E=\epsilon\tilde E$ and similarly for $F$. The smallness of  
$\epsilon>0$ amounts to the enzyme concentrations being small compared to the 
concentrations of the substrates. Finally, let $\tau=\epsilon t$. 
For $\epsilon>0$ small, the system exhibits fast-slow dynamics. The two different time scales are reflected in the time variables $t$ and $\tau$. 
The transformed equations are
\begin{eqnarray}
&&\frac{d x_{Y}}{d\tau}=-k_1 x_{Y}\tilde x_{E}+k_2 \tilde x_{YE}
+k_{12} \tilde x_{YPF},\label{rescaled1}\\
&&\frac{dx_{YPP}}{d\tau}=-k_7 x_{YPP}\tilde x_F+k_6\tilde x_{YPE}
+k_8 \tilde x_{YPPF},\label{rescaled3}\\
&&\epsilon\frac{d \tilde x_{YE}}{d\tau}=k_1x_{Y}\tilde x_{E}-(k_2+k_3) 
\tilde x_{YE},\label{rescaled4}\\
&&\epsilon\frac{d \tilde x_{YPE}}{d\tau}=k_4 x_{YP}\tilde x_E
-(k_5+k_6) \tilde x_{YPE},\label{rescaled5}\\
&&\epsilon\frac{d \tilde x_{YPF}}{d\tau}=k_{10}x_{YP}\tilde x_F
-(k_{11}+k_{12})\tilde x_{YPF},\label{rescaled6}\\
&&\epsilon\frac{d \tilde x_{YPPF}}{d\tau}=k_7 x_{YPP}\tilde x_F-(k_8+k_9) 
\tilde x_{YPPF}.\label{rescaled7}
\end{eqnarray}
Note that equations (\ref{dfc2}), (\ref{dfc8}) and (\ref{dfc9}) could also be 
rescaled in a similar way so that in total a system of nine equations 
depending on $\epsilon$ is obtained whose solutions for any $\epsilon>0$ are 
in one to one correspondence with the solutions of the system 
(\ref{dfc1})-(\ref{dfc9}). Setting $\epsilon=0$ in equations 
(\ref{rescaled4})-(\ref{rescaled7}) gives
\begin{eqnarray}
&&\tilde x_{YE}=\frac{k_1}{k_2+k_3}x_{Y}\tilde x_{E},\label{stat1}\\
&&\tilde x_{YPE}=\frac{k_4}{k_5+k_6} x_{YP}\tilde x_E,\label{stat2}\\
&&\tilde x_{YPF}=\frac{k_{10}}{k_{11}+k_{12}}x_{YP}\tilde x_F,\label{stat3}\\
&&\tilde x_{YPPF}=\frac{k_7}{k_8+k_9} x_{YPP}\tilde x_F.\label{stat4}
\end{eqnarray}
Adding these equations in pairs gives
\begin{eqnarray}
&&\tilde E=\left[1+\frac{k_1}{k_2+k_3}x_{Y}
+\frac{k_4}{k_5+k_6} x_{YP}\right]\tilde x_{E}\label{erel},\\
&&\tilde F=\left[1+\frac{k_{10}}{k_{11}+k_{12}}x_{YP}
+\frac{k_7}{k_8+k_9} x_{YPP}\right]\tilde x_F.\label{frel}
\end{eqnarray}
Using (\ref{stat1})-(\ref{stat4}) the evolution equations 
(\ref{rescaled1})-(\ref{rescaled3}) can be rewritten as
\begin{eqnarray}
&&\frac{d x_{Y}}{d\tau}=-\frac{k_1k_3}{k_2+k_3} x_{Y}\tilde x_{E}
+\frac{k_{10}k_{12}}{k_{11}+k_{12}} x_{YP}\tilde x_F,\label{mmraw1}\\
&&\frac{d x_{YPP}}{d\tau}=\frac{k_4k_6}{k_5+k_6}x_{YP}\tilde x_E
-\frac{k_7k_9}{k_8+k_9} x_{YPP}\tilde x_F.\label{mmraw3}
\end{eqnarray} 
Note that  $\bar Y= x_Y+x_{YP}+x_{YPP} + \epsilon ( \tilde x_{YE}+\tilde x_{YPE}+\tilde x_{YPF}+\tilde x_{YPPF})$. We define $\tilde Y=\bar Y(0)$ to emphasize the dependence on $\epsilon$.  Then for $\epsilon=0$ 
the relation 
\begin{equation}\label{totalY2}
\tilde Y=x_Y+x_{YP}+x_{YPP}
\end{equation}
holds so that setting $x_{YP}=\tilde Y-x_Y-x_{YPP}$ this is a closed system of 
equations. Summing up, the equations are of the form
\begin{eqnarray}
&&\frac{d x_{Y}}{d\tau}=-v_1+v_2,\\
&&\frac{d x_{YPP}}{d\tau}=v_3-v_4
\end{eqnarray}  
where, using (\ref{erel}) and (\ref{frel}),
\begin{eqnarray}
&&v_1=\frac{a_1x_Y}{1+b_1x_Y+c_1x_{YP}},\\
&&v_2=\frac{a_2x_{YP}}{1+c_2x_{YP}+d_2x_{YPP}},\\
&&v_3=\frac{a_3x_{YP}}{1+b_1x_Y+c_1x_{YP}},\\
&&v_4=\frac{a_4x_{YPP}}{1+c_2x_{YP}+d_2x_{YPP}}
\end{eqnarray}
and
\begin{eqnarray}
&&a_1=\frac{k_1k_3\tilde E}{k_2+k_3},\ \ \
a_2=\frac{k_{10}k_{12}\tilde F}{k_{11}+k_{12}},\\
&&a_3=\frac{k_4k_6\tilde E}{k_5+k_6},\ \ \
a_4=\frac{k_7k_9\tilde F}{k_8+k_9},\\
&&b_1=\frac{k_1}{k_2+k_3},\ \ \
c_1=\frac{k_4}{k_5+k_6},\\
&&c_2=\frac{k_{10}}{k_{11}+k_{12}},\ \ \
d_2=\frac{k_7}{k_8+k_9}.
\end{eqnarray}
The conditions for a stationary solution are $v_1=v_2$ and $v_3=v_4$. From 
now on the simplifying assumption will be made that
\begin{equation}\label{common}
b_1=c_1=c_2=d_2.
\end{equation} 
Call the common value of these quantities $b$. 
The simplifying assumption (\ref{common}) means that the ratios $\frac{k_i}{k_{i+1}+k_{i+2}}$, $i=1,4,7,10$, between the constants of the reactions producing and consuming the substrate-enzyme complexes during phosphorylation are equal. 
Note that (\ref{common}) is
equivalent to a set of relations between the $k_i$ and $b$. If $\tilde E$ and 
$\tilde F$ are given then $k_3$, $k_6$, $k_9$ and $k_{12}$ are determined
by the $a_i$ and $b$. There are then four degrees of  freedom in choosing the other
$k_i$ so as to satisfy the other relations. Thus if values $a_i$, $b$, 
$\tilde E$ and $\tilde F$ are given they can be realized by many choices of
$k_i$. 

\section{Review of some mathematical techniques}\label{techniques}

In this section some ideas from bifurcation theory and geometric singular 
perturbation theory will be reviewed. Bifurcation theory for 
ordinary differential equations concerns parameter-dependent systems
$\dot x=f(x,\mu)$ where the variable $x$ and the parameter $\mu$ belong to subsets of 
Euclidean spaces. Under some circumstances the flows of the equations
for different values of $\mu$ are related to each other by a diffeomorphism 
of the space with coordinates $x$. In that case the qualitative nature of the 
flow does not change with $\mu$. This is for instance true in a 
neighbourhood of a point $x_0$ and a parameter value $\mu_0$ for which 
$f(x_0,\mu_0)\ne 0$ (Flow-box Theorem), or in a neighbourhood of a hyperbolic equilibrium (Hartman-Grobman Theorem). If the flows cannot be related in this way for 
$\mu$ close to some value $\mu_0$ then it is said that a bifurcation
occurs at $\mu_0$.
For example at an equilibrium $x_0$ (where of course $f(x_0,\mu_0)=0$), a bifurcation occurs iff the linearisation $D_xf(x_0,\mu_0)$ admits eigenvalues with vanishing real part. In this case a non-trivial centre manifold exists. 
 The aim of bifurcation theory is to obtain insights into 
the qualitative changes in the flow in this kind of situation. This can
often provide valuable information on the dynamics of the system for 
parameter values close to $\mu_0$, for instance the number and 
stability of stationary solutions. This information is obtained when
certain criteria are satisfied. Typical examples of relevant conditions are 
whether certain combinations of derivatives of $f$ at $(x_0,\mu_0)$ are 
zero or non-zero or the signs of such combinations. When studying a 
bifurcation at a stationary point $x_0$ it is often possible to reduce the
dimension of the space of unknowns $x$ using centre manifold theory. A
detailed discussion of this can be found in chapter 5 of \cite{kuznetsov10}. 
The essential qualitative features occur on the centre manifold whose dimension
is equal to the dimension of the space of generalized eigenvectors of the
linearization $Df(x_0,\mu_0)$ corresponding to purely imaginary eigenvalues. 

To make this more concrete we consider the example of greatest importance 
in what follows, the cusp bifurcation. Suppose that the dynamical system
is one-dimensional, remembering that this may have resulted from a system
of higher dimension by centre manifold reduction. Suppose that $\mu$
has two components $\mu_1$ and $\mu_2$. By choosing coordinates 
appropriately it can be assumed that the bifurcation occurs at the point
$(0,0)$. The system is said to exhibit a generic cusp bifurcation if 
\begin{equation}\label{cusp}
f(0,0)=0, f_x(0,0)=0, f_{xx}(0,0)=0, f_{xxx}(0,0)\ne 0,
f_{x\mu_1}f_{\mu_2}-f_{x\mu_2}f_{\mu_1}\ne 0 
\end{equation}
where the subscripts
denote partial derivatives with respect to the corresponding variables.
If only the first three of these conditions are satisfied it might be said 
loosely that there is a cusp bifurcation but this does not exclude the 
possibility that extra degeneracies might occur. These can be ruled out by
imposing the last two conditions and this is what is meant by the term 
'generic' here. When the one-dimensional system arises from a system of 
higher dimension by centre manifold reduction the diagnostic conditions for
a cusp bifurcation can be reexpressed in terms of algebraic conditions on
derivatives of the original system, as explained in chapter 8 of 
\cite{kuznetsov10}. This will be seen in more detail in the example of the 
dual futile cycle in Sect. \ref{bistabmm}. The significance of the cusp for
the results of this paper is that when a bifurcation of this type with
$f_{xxx}(0,0)<0$ occurs at some point there are nearby parameter values for
which there exist two stable stationary solutions, and one which is unstable. 
The case $f_{xxx}(0,0)>0$
gives one stable and two unstable stationary solutions and is not 
relevant for what follows.  
The name cusp bifurcation comes from the following. Figure \ref{cuspfig} shows $\mathbb{R}^3$, where the horizontal plane is the plane of parameters $\mu=(\mu_1, \mu_2)$, and the vertical direction the one dimensional variable $x$.
\psfrag{1mu}{$\mu_1$}
\psfrag{2mu}{$\mu_2$}
\psfrag{x}{$x$}
\begin{figure}[h]
\includegraphics[width=1\textwidth]{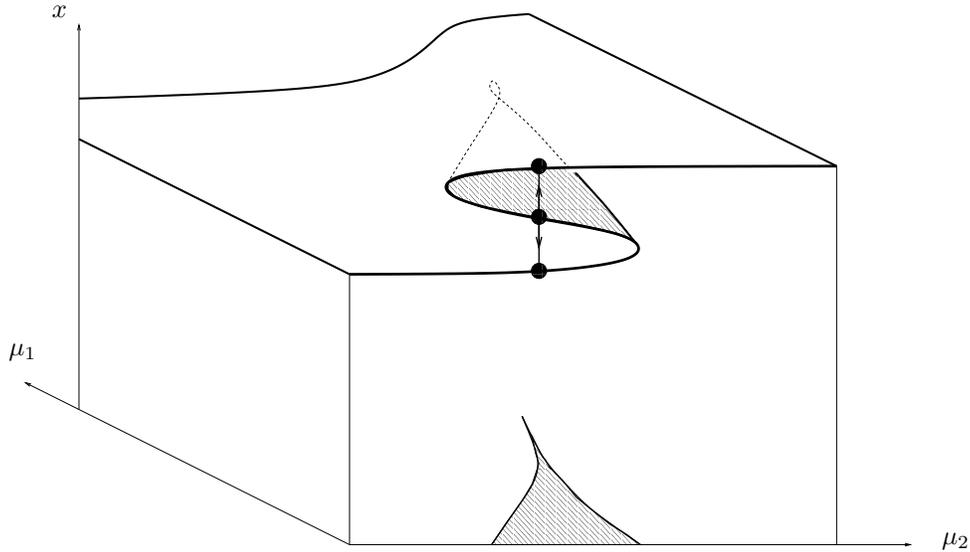}
\caption{ \label{cuspfig} Bifurcation diagram for a generic cusp bifurcation: the unstable branch of the surface of equilibria is shaded, as well as  the region in the parameter plane with multiple equilibria.}
\end{figure}
The set of equilibria forms a surface in this bifurcation diagram. Near a cusp bifurcation, this surface shows a fold. The  line delimiting the folding region projected  into the horizontal parameter plane forms a cusp, although the surface
in $\mathbb{R}^3$ is smooth. For a given parameter in the cusp region, there are two stable equilibria and one unstable one. Furthermore there are heteroclinic orbits connecting the unstable equilibrium to the two stable ones.  

Next some remarks will be made on geometric singular perturbation theory
(GSPT). This theory was developed by Fenichel \cite{fenichel79} and section
3 of that paper provides an introduction to some of the main ideas involved.
The standard situation in which this theory is applied is for a system of 
ODE of the form
\begin{equation}\label{gspt}
\begin{cases}
x'=f(x,y,\epsilon),\\
\epsilon y'=g(x,y,\epsilon),
\end{cases}
\end{equation}
where the prime denotes the derivative with respect to a time coordinate
$\tau$. Here $f$ and $g$ are 
smooth functions and $\epsilon$ is a parameter. The aim is to understand the 
qualitative behaviour of solutions of this system in the limit 
$\epsilon\to 0$ where the system (\ref{gspt}) reads
\begin{equation}\label{gspt0}
\begin{cases}
x'=f(x,y,0),\\
0=g(x,y,\epsilon).
\end{cases}
\end{equation}
The system (\ref{gspt0}) is called the reduced system. 
The dynamics of the reduced system  may provide useful information about the 
qualitative behaviour of solutions of the system (\ref{gspt}) with $\epsilon$ small and
non-zero. The system (\ref{gspt}) is singular at $\epsilon=0$ since the time derivative
of $y$ is multiplied by a factor which vanishes there. This means that 
regular perturbation theory cannot be applied. Note that the system 
(\ref{rescaled1})-(\ref{rescaled7}) is of this general form.

Suppose that in the above system $x$ is a point of ${\mathbb R}^{n_1}$ and $y$ a 
point of ${\mathbb R}^{n_2}$. Let $\tau=\epsilon t$ for $\epsilon>0$ 
and denote the derivative with respect to $t$ by a dot. Transforming the
equations to the time coordinate $t$ and adding the equation $\dot\epsilon=0$
gives the following  system of $n_1+n_2+1$ equations
\begin{equation}\label{extsys}
\begin{cases}
\dot x=\epsilon f(x,y,\epsilon),\\ 
\dot y=g(x,y,\epsilon),\\
\dot \epsilon=0,
\end{cases}
\end{equation}
Call system (\ref{extsys}) the extended system. This extends smoothly to 
$\epsilon=0$ to 
\begin{equation}\label{extsys0}
\begin{cases}
\dot x=0,\\ 
\dot y=g(x,y,0),\\
\dot\epsilon=0,
\end{cases}
\end{equation}
 Assume that the set defined by the equation 
$g(x,y,0)=0$ is equivalent to $y=h_0(x)$ for a smooth function $h_0$. In 
particular it is a manifold $M_0$. This manifold consists of stationary points 
of the extended system (\ref{extsys}) at $\epsilon=0$. Assume further that the linearization of 
system (\ref{extsys}) at any point of $M_0$ has no purely imaginary eigenvalues other than 
the zero eigenvalues arising from the fact that $M_0$ consists of stationary
solutions and the fact that $\epsilon$ is a conserved quantity. 
Call these eigenvalues corresponding to eigenvectors transverse
to $M_0$ the transverse eigenvalues. They correspond to the eigenvalues
of the linearization of the system $\dot y=g(x,y,0)$ with $x$ held fixed.
This leads to the following
\begin{definition} \label{transeigen}
For a point $(x,h_0(x))\in M_0$, we define the  transverse eigenvalues at $x$ as the eigenvalues of the linearisation $D_yg(x,h_0(x),0)$, under the assumption that none of them is purely imaginary. 
\end{definition}
A centre manifold $M$ of the extended system (\ref{extsys}) at any  point of $M_0$ contains 
$M_0$. It was shown in \cite{fenichel79} that there exists a manifold $M$ 
which is a centre manifold for all points of $M_0$ close to a given point. 
The manifold $M$  is sometimes called a slow manifold. The restriction of the extended 
system (\ref{extsys}) to the slow manifold $M$ can be interpreted as a system of $n_1$ equations 
depending on the parameter $\epsilon$. Remarkably, rewriting this system in 
terms of the time coordinate $\tau$ gives a system which depends on $\epsilon$ in 
a regular fashion, even at $\epsilon=0$. For $\epsilon=0$ it coincides with 
the reduced system (\ref{gspt0}). For $\epsilon>0$ call this the perturbed 
reduced system.    

The reduction theorem of Shoshitaishvili (see \cite{kuznetsov10}, Theorem 5.4)
shows that the qualitative behaviour of solutions of the extended system (\ref{extsys}) near a 
point of $M_0$ is determined in a simple way by the dynamics on $M$. In fact 
it is topologically equivalent to the product of the dynamics on $M$ with a 
standard saddle. What this means in
practise is that dynamical features of the reduced system (\ref{gspt0}) are inherited by
the full system (\ref{gspt}) for $\epsilon$ small. In 
the case of the central example of this paper the real parts of the 
transverse eigenvalues of the linearization at points of $M_0$ defined in \ref{transeigen} are all 
negative. Thus the standard saddle mentioned above is a hyperbolic 
sink in this case. This implies that bistability of the MM system (which is 
the reduced system in this case) is inherited by the perturbed reduced system. 
It then follows from the theorem of Shoshitaishvili that bistability is 
inherited by the MM-MA system. In other situations, as will be explained 
further in Sect. \ref{further}, the existence of periodic solutions of the 
reduced system is inherited by the full system.

\section{Bistability for the Michaelis-Menten system}\label{bistabmm}

The MM system can be written as
\begin{eqnarray}
&&\frac{d x_{Y}}{d\tau}=-\frac{a_1x_Y}{1+b(\tilde Y-x_{YPP})}
+\frac{a_2x_{YP}}{1+b(\tilde Y-x_Y)},\label{mm1}\\
&&\frac{d x_{YPP}}{d\tau}=\frac{a_3x_{YP}}{1+b(\tilde Y-x_{YPP})}
-\frac{a_4x_{YPP}}{1+b(\tilde Y-x_Y))}.\label{mm2}
\end{eqnarray}
It has been 
observed in \cite{ortega06} that if certain restrictions are imposed on the 
parameters it is possible to find an explicit stationary solution of  
equations (\ref{mm1})-(\ref{mm2}) with interesting properties. Consider 
a stationary solution that we call $B$, with starred coordinates $(x_Y^*, x_{YPP}^*)$ and the corresponding quantity $x_{YP}^*=\tilde{Y}- x_Y^*-x_{YPP}^*$.  We look for an equilibrium $B$ satisfying $x_Y^*=x_{YPP}^*$, a condition which is
satisfied by the special solutions considered in \cite{ortega06}. Then the 
denominators in the expressions for the evolution equations are all equal and
the equations for stationary solutions imply that
\begin{equation}
a_1 x_Y^*=a_2 x_{YP}^*,\ \ \ a_3 x_{YP}^*=a_4 x_{YPP}^*.
\end{equation}
It follows that 
\begin{equation}\label{cond1}
\frac{a_2a_4}{a_1a_3}=1.
\end{equation}
Let $N=1+b(\tilde Y-x_Y^*)$. The linearization of the system at the stationary 
point $B$ is $N^{-2}$ times
\begin{equation}
\left[
{\begin{array}{cc}\label{rescaledmatrix}
-(a_1+a_2)+b[-a_1(\tilde Y-x)-a_2x]& -a_1bx-a_2 N 
\\ -a_3 N-ba_4x& -(a_3+a_4)+b[-a_4(\tilde Y-x)-a_3x]
\end{array}}
\right]
\end{equation}  
where $x=x_Y^*$. The equations $a_1 x=a_2 (\tilde Y-2x)$ and 
$a_4 x=a_3 (\tilde Y-2x)$ can be used to eliminate $a_1$ and $a_4$ in favour of
$a_2$ and $a_3$. After simplification and multiplication by an overall factor
$x$ the matrix (\ref{rescaledmatrix}) becomes
\begin{equation} \label{simprescmatrix}
\left[
{\begin{array}{cc}
-(\tilde Y-x)a_2-ba_2(\tilde Y^2-3x\tilde Y+3x^2)& -a_2 x+a_2 b(-2\tilde Y x
+3x^2)
\\ -a_3 x+a_3 b(-2\tilde Y x+3x^2)&-(\tilde Y-x)a_3
-ba_3(\tilde Y^2-3x\tilde Y+3x^2) 
\end{array}}
\right].
\end{equation} 
Now we look for a stationary point $B$ where a bifurcation takes place, so that the above matrix should admit a zero eigenvalue. 
The determinant of this matrix vanishes precisely when
\begin{equation}
|b(\tilde Y-x)+b^2(\tilde Y^2-3x\tilde Y+3x^2)|=|bx-b^2 (-2\tilde Y x+3x^2)|.
\end{equation}
There are two cases according to the relative signs of the quantities
inside the absolute values. Setting $u=b\tilde Y$ and $v=bx$ we get the 
alternative conditions $(u-3v+1)(u-2v)=0$ and $u(u-v+1)=0$. The case $u=0$
can be discarded since it corresponds to a vanishing concentration of 
substrate. The case $u=2v$ is also not relevant since it corresponds to
a vanishing concentration of $x_{YP}$. The case $u=v-1$ leads to a negative 
concentration of $x_{YP}$. Consider now the remaining case $u=3v-1$. A
computation shows that 
\begin{equation}\label{expstat}
x_Y^*=x_{YPP}^*=\frac{b\tilde Y+1}{3b},\ \ \  x_{YP}^*=\frac{b\tilde Y-2}{3b}.
\end{equation}
Now 
\begin{equation}\label{cond2}
\frac{b\tilde Y-2}{b\tilde Y+1}=\frac{x_{YP}^*}{x_Y^*}=\frac{a_1}{a_2}
=\sqrt{\frac{a_1a_4}{a_2a_3}}.
\end{equation}
We have a bifurcation if and only this last relation is satisfied. A necessary 
and sufficient condition that it can be satisfied for parameters with 
$b\tilde Y$ positive is that $\frac{a_2a_3}{a_1a_4}>1$. In that case 
$b\tilde Y>2$. We summarize the conditions necessary and sufficient for the existence of  the  stationary point denoted by $B$ with coordinates given by (\ref{expstat}):
\begin{eqnarray} 
\frac{a_2a_4}{a_1a_3}=1  & \text{ equilibrium condition}   \label{equilcond}\\
1> \frac{b\tilde{Y}-2}{b\tilde{Y}+1} = \sqrt{ \frac{a_1a_4}{a_2a_3}  } >0 &\text{bifurcation condition} \label{bifcond}
\end{eqnarray}
 Substituting the 
bifurcation condition into the linearization $N^{-2}$ times (\ref{rescaledmatrix}) simplifies it to 
\begin{equation} \label{linearization}
-\frac{b\tilde Y}{N^2}\left[
{\begin{array}{cc}
a_2&a_2
\\a_3&a_3
\end{array}}
\right].
\end{equation}
This matrix has rank one with the right eigenvector corresponding to the zero 
eigenvalue being the transpose of $[1,-1]$ and the left eigenvector 
$[a_3,-a_2]$. The other eigenvalue has the same sign as the trace and is 
negative. 

Next it will be shown that there is a non-degenerate cusp bifurcation at the 
point $B$. The defining conditions for a bifurcation of this type are given
by (\ref{cusp}) in the case of a one-dimensional system. As discussed in
Sect. \ref{techniques} a non-degenerate cusp bifurcation in higher dimensional
dynamical systems is defined by relating the given system to the 
one-dimensional case by means of centre manifold reduction. The 
essential dynamics is determined by that on the centre manifold which is 
one-dimensional in this case. The point $B$ in the MM
system has a one-dimensional centre manifold and so these techniques are
applicable. The reduction theorem makes transparent what is happening on an
abstract level. On the other hand the concrete calculations which are needed
to verify the presence of a non-degenerate cusp bifurcation by obtaining
a suitable approximation to the centre manifold and analysing the 
dynamics on that manifold are hard to keep track of. Here we follow
an approach to organizing these calculations explained in Section 8.7 of
\cite{kuznetsov10}.

It is necessary to calculate certain combinations of derivatives of the 
right hand side of the MM system at the point $B$ with respect to the 
unknowns and the parameters. For the cusp bifurcation it is necessary to
vary two parameters. We choose these to be $a_1$ and $a_4$ and hold all other
parameters in the system fixed. It is convenient to introduce the notation
$X^i$, $i=1,2$ for the right hand sides of the equations of the MM system
(\ref{mm1})-(\ref{mm2}) and $x_i$, $i=1,2$ for the coordinates $x_{Y}$ and 
$x_{YPP}$. The components of the right and left eigenvectors of the 
linearization will be denoted by $R^i$ and $L_i$, respectively. For 
derivatives we use the following abbreviations
\begin{equation}
X^i{}_{,j}=\frac{\partial X^i}{\partial x_j},
X^i{}_{,jk}=\frac{\partial^2 X^i}{\partial x_j\partial x_k},
X^i{}_{,jkl}=\frac{\partial X^i}{\partial x_j\partial x_k\partial x_l}.
\end{equation}

Now higher derivatives of the coefficients in the system MM will be examined.
For this it is useful to consider derivatives of rational functions more 
generally. Each $X^i$ is a sum of rational functions where numerator and 
denominator are linear in the arguments. If $f$ and $g$ are smooth functions 
of two variables with $g$ non-vanishing then
\begin{eqnarray}
&&D_i\left(\frac{f}{g}\right)=\left(\frac{D_if}{g}\right)
+fD_i\left(\frac{1}{g}\right), i=1,2,\\
&&D_iD_j\left(\frac{f}{g}\right)=\left(\frac{D_iD_jf}{g}\right)
+D_ifD_j\left(\frac{1}{g}\right)\\
&&+D_jfD_i\left(\frac{1}{g}\right)
+fD_iD_j\left(\frac{1}{g}\right).\nonumber
\end{eqnarray}
When $f$ is linear the term with $D_iD_jf$ vanishes. Now
\begin{eqnarray}
&&D_i\left(\frac{1}{g}\right)=-\frac{D_i g}{g^2},\\
&&D_iD_j\left(\frac{1}{g}\right)=-\frac{D_iD_jg}{g^2}+2\frac{D_igD_jg}{g^3}.
\end{eqnarray}
When $g$ is linear the first term vanishes. Thus when both $f$ and $g$ are 
linear we get 
\begin{equation}
D_iD_j\left(\frac{f}{g}\right)=\frac{-gD_ifD_jg-gD_igD_jf+2fD_igD_jg}{g^3}.
\end{equation}
Contracting with $R^iR^j$ (i.e. multiplying by this expression and summing over
repeated indices) gives
\begin{equation}
R^iR^jD_iD_j\left(\frac{f}{g}\right)=
2\left(\frac{-g(R^iD_if)(R^jD_jg)+f(R^iD_ig)^2}{g^3}\right).
\end{equation}   
Here and in what follows we use the summation convention - sums over repeated
indices are implicitly assumed.

Consider the quantities $X^i{}_{,jk}R^jR^k$, evaluated at $B$. They are given 
by
\begin{eqnarray}
&&X^1{}_{,jk}R^jR^k=2N^{-3}[a_1b(-N+bx_Y)+a_2b^2x_{YP}],\\
&&X^2{}_{,jk}R^jR^k=2N^{-3}[a_3b^2x_{YP}+a_4b(-N+bx_{YPP})].
\end{eqnarray}    
After some substitutions this gives
\begin{eqnarray}
&&X^1{}_{,jk}R^jR^k=\frac23 N^{-3}a_2[b(b\tilde Y-2)+3b^2x_{YP}]
=4N^{-3}a_2 b^2x_{YP},\\
&&X^2{}_{,jk}R^jR^k=\frac23 N^{-3}a_3[3b^2x_{YP}+b(b\tilde Y-2)]
=4N^{-3}a_3 b^2x_{YP}.
\end{eqnarray}
It follows that $L_iX^i{}_{,jk}R^jR^k=0$ and this is one of the conditions for a
cusp bifurcation.

The next step is to examine the third order derivatives. Note first that
\begin{equation}
D_iD_jD_k\left(\frac{1}{g}\right)
=-\frac{6D_igD_jgD_kg}{g^4}.
\end{equation}
In the given situation
\begin{eqnarray}
&&D_iD_jD_k\left(\frac{f}{g}\right)\\
&&=\frac{2(D_ifD_jgD_kg+D_kfD_igD_jg+D_jfD_kgD_ig)}{g^2}
-\frac{6fD_igD_jgD_kg}{g^4}\nonumber
\end{eqnarray}
and
\begin{equation}
K^iK^jK^kD_iD_jD_k\left(\frac{f}{g}\right)
=\frac{6(K^jD_jg)^2(gK^iD_if-fK^iD_ig)}{g^4}.
\end{equation}
It follows that
\begin{equation}
K^iK^jK^kD_iD_jD_k\left(\frac{f}{g}\right)
=-3\frac{K^iD_ig}{g}K^iK^jD_iD_j\left(\frac{f}{g}\right).
\end{equation}
Hence
\begin{equation}\label{thirdder}
L_iX^i{}_{,jkl}R^jR^kR^l=-12N^{-4}(a_2^2+a_3^2)b^3x_{YP}.
\end{equation}
In the one-dimensional case one of the conditions for a cusp is that the third
derivative of the right hand side does not vanish. In higher dimensions the
analogue of the third derivative is not just the expression in (\ref{thirdder}).
There is an extra correction given on p. 374 of \cite{kuznetsov10}. To compute 
this we need a vector whose image under the linearization at $B$ is the
vector with components $X^i{}_{,jk}R^jR^k$. Call it Z. This is 
$-4N^{-1}(a_2+a_3)^{-1}\tilde Y^{-1} bx_{YP}[a_2,a_3]^T$. The crucial quantity is
\begin{equation}\label{crucial}
X^i{}_{,jkl}R^jR^kR^lL_i-3X^i{}_{,jk}R^jZ^kL_i.
\end{equation}
To evaluate this we first calculate the second derivatives of $X^1$ and $X^2$,
which are
\begin{equation}N^{-3}\left[
{\begin{array}{cc}
-2Na_2b+2a_2b^2x_{YP}& N(a_1-a_2)b
\\ N(a_1-a_2)b&2b^2a_1x_Y
\end{array}}\right]
\end{equation}
and
\begin{equation}N^{-3}\left[
{\begin{array}{cc}
2b^2a_4x_{YPP}& N(-a_3+a_4)b
\\ N(-a_3+a_4)b&-2Na_3b+2a_3b^2x_{YP}
\end{array}}\right].
\end{equation}
Contracting with $R$ corresponds to taking the differences of the columns
of these matrices. The results are
\begin{equation}
N^{-3}[-N(a_1+a_2)b+2a_2b^2x_{YP}\ \ \ N(a_1-a_2)b-2b^2a_1x_Y]
\end{equation}
and 
\begin{equation}
N^{-3}[2b^2a_4x_{YPP}+N(a_3-a_4)b\ \ \ N(a_3+a_4)b-2a_3b^2x_{YP})].
\end{equation}
Eliminating $a_1$ and $a_4$ in terms of $a_2$ and $a_3$ as has been done
in previous arguments and expressing everything in terms of $b\tilde Y$ gives
\begin{equation}
N^{-3}\left[-\frac23 a_2b(b\tilde Y+1)\ \ \ -\frac23 a_2b(b\tilde Y+1)\right],
\ \ \
N^{-3}\left[\frac23 a_3b(b\tilde Y+1)\ \ \ \frac23 a_3b(b\tilde Y+1)\right].
\end{equation}
Contracting this with $[a_2,a_3]^T$ and $L$ gives 
$-4N^{-3}a_2 a_3 (a_2+a_3)b^2x_Y$. Hence
\begin{equation}
X^i{}_{,jk}R^jZ^kL_i=16N^{-3}a_2 a_3 b^2x_{YP}\tilde Y^{-1}.
\end{equation}
The crucial quantity is strictly negative and so another of the conditions for
the cusp bifurcation is satisfied.

It remains to examine the derivatives with respect to parameters. Recall that
the parameters to be varied are $\mu_1=a_1$ and $\mu_2=a_4$. The derivatives 
with respect to the two parameters of the right hand sides of the two MM 
equations are
\begin{equation}
\left(\frac{-x_Y}{1+b(x_Y+x_{YP})},0\right),\ \ \ 
\left(0,\frac{-x_{YPP}}{1+b(x_{YP}+x_{YPP})}\right).
\end{equation}
Contracting these with respect to the left eigenvector $L$ gives
\begin{equation}
\frac{-a_2x_Y}{1+b(x_Y+x_{YP})},\ \ \ \frac{a_3x_{YPP}}{1+b(x_{YP}+x_{YPP})}.
\end{equation}
Differentiating the vector field in the direction of the right eigenvector $R$
gives
\begin{equation}
\left(\frac{-a_1(1+bx_{YP})-a_2bx_{YP}}{[1+b(x_Y+x_{YP})]^2},
\frac{a_4(1+bx_{YP})+a_3bx_{YP}}{[1+b(x_{YP}+x_{YPP})]^2}\right).
\end{equation}
The derivatives of this with respect to the two parameters are
\begin{equation}
\left(\frac{-(1+bx_{YP})}{[1+b(x_Y+x_{YP})]^2}
,0\right),\ \ \
\left(0,\frac{(1+bx_{YP})}{[1+b(x_{YP}+x_{YPP})]^2} 
\right).
\end{equation}
Contracting these with respect to the left eigenvector gives
\begin{equation}
\frac{-a_2(1+bx_{YP})}{[1+b(x_{YP}+x_{YPP})]^2},\ \ \ 
\frac{-a_3(1+bx_{YP})}{[1+b(x_{YP}+x_{YPP})]^2}.
\end{equation}
It follows that
\begin{eqnarray}
&&\left[\frac{\partial X^i}{\partial\mu_1}\frac{\partial X^j{}_{,k}}
{\partial\mu_2}
-\frac{\partial X^i}{\partial\mu_2}\frac{\partial X^j{}_{.k}}{\partial\mu_1}
\right]L_iL_jR^k\nonumber\\
&&=-(a_2^2+a_3^2)N^{-3}x_Y(1+bx_{YP}).
\end{eqnarray} 
This is the last diagnostic quantity for the cusp bifurcation and it is seen 
to be non-zero. This means that all conditions needed for the proof that there 
is a generic cusp bifurcation at $B$ have been verified.

\begin{theorem} \label{bistabMM}
There exist values of the parameters $a_i$, $b$ and $\bar Y$
for which the  MM system \eqref{mm1}-\eqref{mm2} has three stationary solutions, 
two of which are locally asymptotically stable and one of which is a saddle of Morse index one. Heteroclinic orbits connect the saddle to both stable equilibria.   
\end{theorem}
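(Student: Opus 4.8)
The plan is to recognise that the computations carried out above have already verified, at the explicit equilibrium $B$ of \eqref{expstat}, every one of the five defining conditions \eqref{cusp} for a generic cusp bifurcation, once $a_1$ and $a_4$ are taken as the two unfolding parameters $\mu_1,\mu_2$ and the two-dimensional MM system is reduced to its one-dimensional centre manifold. Indeed, the linearization \eqref{linearization} has a simple zero eigenvalue with negative complementary eigenvalue; the quadratic quantity $L_iX^i{}_{,jk}R^jR^k$ vanishes; the corrected cubic quantity \eqref{crucial} is strictly negative; and the parameter-transversality determinant is non-zero. So the first step is simply to fix a base choice of $(a_i,b,\tilde Y)$ satisfying the equilibrium condition \eqref{equilcond} and the bifurcation condition \eqref{bifcond}, the latter guaranteeing that $B$ has strictly positive coordinates and is thus a genuine interior equilibrium at which a generic cusp occurs.

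Next I would invoke the normal-form theory for the cusp from Chapter 8 of \cite{kuznetsov10}. Centre-manifold reduction together with the verified conditions makes the parameter-dependent restriction of the MM system to its centre manifold locally topologically equivalent, near $B$, to the family $\dot\xi=\beta_1+\beta_2\xi+s\xi^3$, where the sign $s$ agrees with the sign of the invariant cubic coefficient \eqref{crucial}; since that quantity is strictly negative, $s=-1$. The non-vanishing of the transversality determinant makes the induced map $(a_1,a_4)\mapsto(\beta_1,\beta_2)$ a local diffeomorphism at $B$, so I would then pick $(a_1,a_4)$ so that $(\beta_1,\beta_2)$ lies strictly inside the cusp wedge $\{\beta_2>0,\ 4\beta_2^3>27\beta_1^2\}$. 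There the equation $\beta_1+\beta_2\xi-\xi^3=0$ has three distinct simple roots $\xi_-<\xi_0<\xi_+$; since $\partial_\xi(\beta_1+\beta_2\xi-\xi^3)=\beta_2-3\xi^2$ is negative at $\xi_\pm$ and positive at $\xi_0$, the outer roots are sinks and the middle one a source of the reduced flow. These correspond to three distinct hyperbolic equilibria of \eqref{mm1}--\eqref{mm2} clustered near $B$.

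The stability statement and the heteroclinics then follow from the reduction theorem of Shoshitaishvili (\cite{kuznetsov10}, Theorem 5.4). Because the complementary eigenvalue of \eqref{linearization} is negative at $B$, it remains negative at the three nearby equilibria by continuity, so near $B$ the full planar flow is topologically equivalent to the product of the one-dimensional cusp dynamics with a one-dimensional contraction. Hence the equilibria corresponding to $\xi_\pm$ have both eigenvalues in the open left half-plane and are locally asymptotically stable, whereas the one corresponding to $\xi_0$ has exactly one positive and one negative eigenvalue, i.e. a saddle of Morse index one. In the reduced dynamics the source $\xi_0$ has two orbits in its unstable set, one limiting on $\xi_-$ and one on $\xi_+$; under the product structure these lie in the centre (slow) manifold and lift to the two branches of the one-dimensional unstable manifold of the saddle, each converging to one of the two stable equilibria. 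This produces the required heteroclinic connections.

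I expect the main obstacle to be bookkeeping rather than conceptual. One must check that the perturbed parameters keep all three equilibria in the admissible region, with positive concentrations compatible with $\tilde Y=x_Y+x_{YP}+x_{YPP}$ from \eqref{totalY2}, and inside the single neighbourhood of $B$ on which both the centre-manifold reduction and the Shoshitaishvili equivalence are valid, so that the normal-form heteroclinics genuinely persist as connections of the MM system. Since $B$ has strictly positive coordinates and the equilibria and eigenvalues vary continuously with $(a_1,a_4)$, a sufficiently small excursion into the cusp wedge secures all of these requirements at once.
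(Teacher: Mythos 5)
Your proposal is correct and follows essentially the same route as the paper's proof: it takes the verified cusp conditions at $B$, applies the normal-form/topological-equivalence result for the generic cusp (Theorem 8.1 and Section 8.7 of \cite{kuznetsov10}), and then lifts the one-dimensional picture to the planar MM system via the reduction theorem of Shoshitaishvili (Theorem 5.4 of \cite{kuznetsov10}) together with the negativity of the non-zero eigenvalue, yielding two sinks, a Morse index one saddle, and the connecting heteroclinics. The extra details you supply (the explicit cusp wedge, the local diffeomorphism $(a_1,a_4)\mapsto(\beta_1,\beta_2)$, and the positivity/admissibility check) are exactly what the paper leaves implicit in its citation of these theorems.
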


\begin{proof}
It has been shown that when the conditions (\ref{cond1}) and 
(\ref{cond2}) are assumed the system exhibits a generic cusp bifurcation. It 
follows from the analysis in Section 8.7 of \cite{kuznetsov10} that centre 
manifold reduction leads to a one-dimensional system satisfying the hypotheses 
of Theorem 8.1 of \cite{kuznetsov10}. This implies that near the bifurcation 
point the dynamics on the centre manifold is topologically equivalent to the 
model system for the cusp bifurcation. Combining this with Theorem 5.4 of 
\cite{kuznetsov10} and using the fact that the non-zero eigenvalue of the 
linearization at the bifurcation point is negative gives the statement about 
the qualitative properties of the stationary solutions. 
\end{proof}
\section{Bistability for the MM-MA system}\label{bistabma}

In this section it is shown that Theorem \ref{bistabMM}, which asserts bistability for the 
MM system, can be used to show an analogous result for the MM-MA system. 
The existence of three stationary solutions for the dual futile cycle with 
suitable parameter values was proved in \cite{wangsontag08a}. The aim here is 
to obtain information about the stability of these solutions.

\begin{theorem} \label{bistabMM-MA}
There exist values of the parameters $k_i$ and values of the
conserved quantities $\bar E$, $\bar F$ and $\bar Y$ for which the system 
(\ref{dfc1})-(\ref{dfc9}) has three stationary solutions, two of which are 
stable and one of which is a saddle of Morse index one. 
Their coordinates are compatible with the given  values of the 
conserved quantities.
Furthermore heteroclinic orbits connect the saddle to both stable equilibria.   

\end{theorem}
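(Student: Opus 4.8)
The plan is to transfer the bistability result of Theorem \ref{bistabMM} from the MM system to the full MM-MA system \eqref{dfc1}-\eqref{dfc9} using the geometric singular perturbation framework reviewed in Section \ref{techniques}. The bridge is the rescaled system \eqref{rescaled1}-\eqref{rescaled7}, which is already in the standard fast-slow form \eqref{gspt}: the slow variables are the substrate concentrations $x_Y$ and $x_{YPP}$ (together with $x_{YP}$ eliminated via the conserved quantity $\tilde Y$), while the fast variables $y$ are the four complex concentrations $\tilde x_{YE}$, $\tilde x_{YPE}$, $\tilde x_{YPF}$, $\tilde x_{YPPF}$ together with the free enzyme concentrations. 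First I would verify that the critical manifold $M_0$, defined by setting $\epsilon = 0$ in \eqref{rescaled4}-\eqref{rescaled7}, is exactly the graph given by the quasistationary relations \eqref{stat1}-\eqref{stat4}, so that $M_0 = \{y = h_0(x)\}$ for a smooth $h_0$; the reduced flow on $M_0$ is then precisely the MM system \eqref{mm1}-\eqref{mm2}.

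The second step is to check the normal hyperbolicity hypothesis of Fenichel's theory, namely that the transverse eigenvalues of Definition \ref{transeigen} all have nonzero (in fact negative) real part along the relevant portion of $M_0$. Here one computes the linearization $D_y g(x,h_0(x),0)$ of the fast subsystem with the slow variables frozen. Because each fast equation \eqref{rescaled4}-\eqref{rescaled7} has the schematic form $\epsilon \dot{\tilde x}_{\text{complex}} = (\text{production}) - (\text{decay})\cdot \tilde x_{\text{complex}}$ with strictly positive decay rates $k_2+k_3$, $k_5+k_6$, $k_{11}+k_{12}$, $k_8+k_9$, and since the free-enzyme concentrations enter through the conservation relations \eqref{erel}-\eqref{frel}, the transverse Jacobian is essentially triangular with strictly negative diagonal entries. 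I would confirm that its spectrum lies in the open left half-plane, so that the standard saddle of the Shoshitaishvili reduction degenerates to a hyperbolic sink, exactly as asserted at the end of Section \ref{techniques}.

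With normal hyperbolicity in hand, Fenichel's theorem produces a slow manifold $M$ that is a centre manifold for the extended system \eqref{extsys} at each point of $M_0$, and the restricted dynamics, written in the $\tau$ time scale, is the perturbed reduced system depending regularly on $\epsilon$ and coinciding with the MM system at $\epsilon=0$. Theorem \ref{bistabMM} gives three hyperbolic stationary solutions of the $\epsilon=0$ reduced system — two sinks and a saddle of Morse index one, joined by heteroclinic orbits. Since these equilibria are hyperbolic, they persist for small $\epsilon>0$ by the implicit function theorem applied to the perturbed reduced vector field, retaining their stability types; the heteroclinic connections persist as well, being transverse intersections of stable and unstable manifolds of hyperbolic equilibria that vary continuously in $\epsilon$. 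Applying the reduction theorem of Shoshitaishvili (\cite{kuznetsov10}, Theorem 5.4) then lifts each of these features from $M$ to the full rescaled nine-variable system: the two sinks on $M$ become asymptotically stable equilibria and the saddle remains a saddle of Morse index one, since tensoring with a hyperbolic sink adds only contracting directions. Finally I would undo the rescaling $x_Z = \epsilon \tilde x_Z$, $\bar E = \epsilon \tilde E$, and choose $\bar E$, $\bar F$, $\bar Y$ large enough to guarantee strict positivity of all coordinates, thereby obtaining the stated equilibria of the original system \eqref{dfc1}-\eqref{dfc9} with coordinates compatible with the prescribed conserved quantities.

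The main obstacle I anticipate is not the eigenvalue computation, which is essentially triangular, but the careful bookkeeping needed to identify the correct slow and fast coordinates after the conservation laws \eqref{totalE}-\eqref{totalY} have been used to reduce the nine equations to six, and then to lift the reduced-system conclusions back to all nine coordinates without losing track of which stoichiometric compatibility class is fixed. One must ensure that the equilibria, their stability indices, and the heteroclinic orbits established on the two-dimensional slow manifold correspond under the Shoshitaishvili equivalence to genuine objects in the full positive orthant, and that the limiting value $\tilde Y = \bar Y(0)$ is matched consistently with the bifurcation and equilibrium conditions \eqref{equilcond}-\eqref{bifcond} realized in Theorem \ref{bistabMM}; verifying that the parameter values furnished there can be attained by positive reaction constants $k_i$, as discussed after \eqref{common}, closes the argument.
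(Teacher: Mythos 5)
Your proposal follows essentially the same route as the paper: the same fast--slow splitting after the conservation laws are used to eliminate the free enzymes and $x_{YP}$, Fenichel's theorem to produce the slow manifold on which the perturbed reduced system depends regularly on $\epsilon$, persistence of the hyperbolic equilibria and heteroclinics from Theorem \ref{bistabMM}, the Shoshitaishvili reduction theorem to lift stability and Morse index to the full system, and finally undoing the rescaling to obtain positive solutions compatible with fixed conserved quantities. Two small corrections to your outline: the transverse Jacobian is not triangular but is the direct sum of two full $2\times 2$ blocks, each with negative trace and positive determinant, which is how the paper concludes that the transverse eigenvalues have negative real parts; and the assertion that there are exactly three stationary solutions requires the upper bound of \cite{wangsontag08a}, which the paper invokes explicitly and your argument omits.
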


Information about stability is to be propagated from the MM system to the MM-MA
system by using a result showing that solutions of the MM-MA system 
approximate solutions of the MM system for $\epsilon$ small. For this it is 
useful to introduce a condensed notation for the equations
(\ref{rescaled1})-(\ref{rescaled7}) with fixed values 
of $\bar Y$, $\tilde E$ and $\tilde F$. Let 
$x=(x_Y-x_Y^*,x_{YPP}-x_{YPP}^*)$ and 
$y=(\tilde x_{YE}-\tilde x_{YE}^*,\tilde x_{YPE}-\tilde x_{YPE}^*,
\tilde x_{YPF}-\tilde x_{YPF}^*,\tilde x_{YPPF}-\tilde x_{YPPF}^*)$. Here the 
quantities $x_Y^*$ and $x_{YPP}^*$ are the coordinates of 
the stationary solution $B$ given by (\ref{expstat}) and the other starred
quantities are those obtained from $x_Y^*$ and $x_{YPP}^*$ by substituting them
into 
\eqref{stat1}-\eqref{frel}.
Then the evolution equations for the variables $x$ and $y$ in the time scale $\tau= \epsilon t$ are of the form 
(\ref{gspt}). The dependence on $\epsilon$ of the right hand 
sides of these equations arises because $\bar Y$ depends on $\epsilon$. 
For $\epsilon=0$ the  second equation of the reduced system  (\ref{gspt0}) becomes an algebraic equation 
which can be solved in the form $y=h_0(x)$ so that $g(x,h_0(x))=0$. 
Substituting this into the first equation of the reduced system  (\ref{gspt0}) gives an equation of the form 
$x'=f(x,h_0(x))$, which is  the MM-system (\ref{mm1})-(\ref{mm2}). The conditions 
$h_0(0)=0$ and $g(0,0)=0$ hold. The limit of these equations for 
$\epsilon\to 0$ can be analysed using the results of \cite{fenichel79} 
as mentioned in Sect. \ref{techniques}. Recall the form of the extended
system in terms of the time variable $t= \epsilon^{-1}\tau$:
\begin{eqnarray}
&&\dot x=\epsilon f(x,y,\epsilon),\label{reduced3}\\
&&\dot y=g(x,y,\epsilon),\label{reduced4}\\
&&\dot \epsilon=0\label{reduced5}.
\end{eqnarray}  
At each of the points of the manifold $M_0$ defined by the vanishing of
$g(x,y,0)$ the rank of the matrix which defines the linearization of the 
system is four. This is equal to the rank of the matrix defining the 
linearization of the equation $y'=g(x,y,0)$ for fixed values of the
variables $x$. In this case the calculation is particularly simple since
the right hand side of the equation is an affine function of the unknowns.
The matrix is the direct sum of two diagonal $2\times 2$
blocks. Each of these blocks has positive determinant and negative trace 
and thus all eigenvalues have negative real parts. Thus in particular at 
each of the points of $M_0$ there is a three-dimensional centre 
manifold. It follows from Theorem 9.1 of \cite{fenichel79} that there is a 
three-dimensional invariant manifold which is a centre manifold for all of 
these points. It can be chosen to be $C^k$ for any finite $k$ and it can be 
written in the form $y=h(x,\epsilon)$, where $h(x,0)=h_0(x)$. This manifold 
can be coordinatized by $x$ and $\epsilon$ and the restriction of the system 
to the manifold can be represented in the form
\begin{eqnarray}
&&x'=f(x,h(x,\epsilon)),\\
&&\epsilon'=0
\end{eqnarray} 
Since it has been shown that the system for $\epsilon=0$ has three hyperbolic 
stationary solutions it follows that the system for small positive values of 
$\epsilon$ also has at least three stationary solutions. The heteroclinics connecting the unstable equilibrium to the stable one also persist for small values of $\epsilon$.  The three  
stationary solutions are also equilibria  of the full system and it follows from the theorem of 
Shoshitaishvili that two of them are stable and the other is a saddle of Morse index one connecting to both stable equilbria. There 
are no more stationary solutions due to the results of \cite{wangsontag08a}. 
For a fixed value of $\epsilon>0$ and using the   values of 
$\bar Y$, $\tilde E$ and $\tilde F$ positive quantities
$ x_{YP}$, $\tilde x_E$ and $\tilde x_F$ can be determined which define
positive stationary solutions of the rescaled version of 
(\ref{dfc1})-(\ref{dfc9}). Undoing the rescaling gives solutions of the 
original system which all have the same values  of the conserved quantities. This 
completes the proof of Theorem \ref{bistabMM-MA}.

Since several conditions are imposed on the parameters of the system, we want 
to summarize here the remaining degrees of freedom. We already saw that the 
simplifying condition \eqref{common}, given $a_1$, $a_2$, $a_3$, $a_4$, $b$, 
$\tilde{E}$ and $\tilde{F}$, leaves 4 degrees of freedom for the coefficients 
$k_i$. The equilibrium condition \eqref{equilcond} is a condition on the 
parameters $a_1$, $a_2$, $a_3$ and $a_4$ and leave us three degrees of freedom 
for them: given $a_1$, $a_2$, $a_3$ and $a_4$ we have $a_2=a_1 a_2/a_4$. 
Putting this into the bifurcation condition \eqref{bifcond} gives $a_4<a_3$, 
which is only a slight restriction to the freedom of chosing $a_3$. Hence, the 
parameters $b$, $a_1$, $a_4<a_3$, $\tilde{E}$, $\tilde{F}$ and for example 
$k_1$, $k_4$, $k_{10}$ and $k_7$ can be chosen freely and then all other 
parameters in the system (the other $k_i$, $a_2$ and $\tilde{Y}$) are 
determined uniquely by the conditions \eqref{common}, \eqref{equilcond} and 
\eqref{bifcond}. In particular, we can choose $a_1$ and $a_4$ in such a way 
that the system is indeed in the cusp region where bistability holds. 

We can also look at those conditions from the  point of view of fixed   
  reaction constants $k_i$.  We saw 
that we can see the simplifying condition \eqref{common} as a comparison 
between reaction constants of production and consumption of 
substrate-enzyme complexes during phosphorylation. Once the simplifying 
assumption has been made, the equilibrium condition \eqref{equilcond} 
translates to 
\begin{equation} \label{quot}
\left( \frac{  \tilde{E}  }{   \tilde{F}} \right)^2= \frac{k_{12}}{k_3}.\frac{k_{9}}{k_6}.
\end{equation}
The quantities of enzymes $\tilde{E}, \tilde{F}$ have to be tuned 
accordingly.
Together with  bifurcation condition \eqref{bifcond}, equation \eqref{quot} leads to the inequality
\begin{equation}
\frac{k_{12}}{k_3}>\frac{k_{9}}{k_6}
\end{equation} 
The quotients $\frac{k_{12}}{k_3}$ and $\frac{k_{9}}{k_6}$ again compare reaction constants: 
$\frac{k_{12}}{k_3}$ compares the constants of the reactions 
\begin{equation*}
\xymatrix{
YE  \ar[r]^{k_3} & YP+E & , & YPF \ar[r]^{k_{12}} & Y+F ,
}
\end{equation*}
and the second quotient the analogous reactions for the double phosphorylated 
form $YPP$. 
In order to understand the biological meaning of this inequality it may be useful to state its concrete interpretation in terms of the reactions. 
The relative rate of production of the phosphorylated and dephosphorylated form is greater for the first cycle than for the second one. 
Furthermore the total quantity of $Y$ has to be tuned accordingly.

\section{Further directions}\label{further}

An important cellular signalling system is the MAPK cascade. It contains three
layers, each of which is a multiple phosphorylation loop of the type of the
multiple futile cycle. They are linked by the fact that the fully phosphorylated
form of the protein which is the substrate in one layer is the kinase for the 
next layer. 
\begin{equation*}
\xymatrix@C=0.5em{
 MKKK\ar@/^1pc/[rr]^{1}&&MKKK-P\ar@/^1pc/[ll]^{2}   \ar[dl]  \ar[dr]&&&&\\
 &&&&&&\\
 MKK\ar@/^2pc/[rr]^{3} &&MKK-P\ar@/^2pc/[rr]^{4} \ar@/^1pc/[ll]^{6}  &&MKK-PP\ar@/^1pc/[ll]^{5} \ar[dl]  \ar[dr] &&\\
 &&&&&&\\
 &&MAPK\ar@/^2pc/[rr]^{7}&&MAPK-P\ar@/^2pc/[rr]^{8} \ar@/^1pc/[ll]^{10}&&MAPK-PP\ar@/^1pc/[ll]^{9}  
}
\end{equation*}
Note that we later focus on the two upper layers of this cascade and use notation $X$ and $Y$ for the substrates $MKKK$ and $MKK$ respectively. 
A fundamental study, both theoretical and experimental, of this 
system was carried out by Huang and Ferrell \cite{huangferrell96}. They 
modelled it by a system of ODE using mass action kinetics. Numerical and 
heuristic evidence has been found indicating that this system has periodic 
solutions \cite{qiao07}. It may be that there is also chaotic behaviour in 
solutions of this system \cite{zumsande10}. Here we will concentrate on the
issue of oscillations. To what extent can the ideas used to prove 
bistability here be applied to give an analytical proof of oscillatory
behaviour in the MAPK cascade and can Michaelis-Menten reduction help?

MAPK cascades are often embedded in positive or negative feedback loops 
and it was suggested in \cite{kholodenko00} that negative feedback could lead
to oscillations in this context. A model was introduced where the MAPK cascade
is in a negative feedback loop belonging to a more complex network of chemical 
reactions taking place in the cell. In this model a feedback loop between a 
phosphorylated form of one protein and a substrate in a layer above is induced 
by chemical reactions external to the cascade. These other reactions are not 
modelled in detail - instead a generic form is chosen for the coupling. 
In the paper \cite{kholodenko00} a Michaelis-Menten-like description of the 
cascade was used but the assumptions leading to the equations used there were 
not stated. An insightful discussion of the modelling issues involved has been 
given in \cite{ventura08}. The equations in \cite{kholodenko00} with feedback 
loop are of the type called 'phenomenological' in \cite{ventura08}. An 
equation with mass action kinetics like the MM-MA model considered above is 
called 'mechanistic'. Finally a model like the MM model considered above is 
referred to as 'reduced mechanistic'. We will adopt the terminology of 
\cite{ventura08} in what follows. In a phenomenological model reduced 
equations for the individual reactions in a network are used and this may fail 
to capture interactions between the different reactions. 

The difference between phenomenological and reduced mechanistic models will 
now be discussed in some detail in the case of the dual futile cycle. A model 
for the dual futile cycle can be extracted from the feedback model of 
\cite{kholodenko00} as follows. We take the second layer of the cascade and 
set the concentration of the phosphorylated form of the protein in the first 
layer, $[MKKK-P]$ in the notation of \cite{kholodenko00}, to a constant 
value. This gives equations for the quantities $[MKK]$ and $[MKK-PP]$ 
corresponding to our $x_Y$ and $x_{YPP}$. For example, the equation for 
the first of these is of the form
\begin{equation}
\frac{d [MKK]}{d\tau}=v_6-v_3
\end{equation} 
where
\begin{equation}
v_3=\frac{(k_3[MKKK-P])[MKK]}{K_3+[MKK]}.
\end{equation} 
This quantity $v_3$ corresponds to $v_1$ in our notation and the key difference
is that it does not depend on $[MKK-P]$ while our $v_1$ does depend on 
$x_{YP}$. In this case the model extracted from \cite{kholodenko00} is
the phenomenological one while our model is the reduced mechanistic one. 

Returning to the question of oscillations in the MAPK cascade, note first
that according to \cite{qiao07} oscillations are already found numerically in 
a truncated version of the Huang-Ferrell model with only two layers, the first 
involving only one phosphorylation and the second involving two. For 
simplicity we now concentrate on that truncated model. 
More precisely, the reactions involved are the following:
\begin{eqnarray*}
\text{1st layer:}&
\xymatrix{
X+E \ar@/^1pc/[r]^{k_1}
&  XE \ar@/^1pc/[l]^{k_2} \ar[r]^{k_3}& XP+E \\
&&\\
XP+F \ar@/^1pc/[r]^{k_1}
&  XP.F \ar@/^1pc/[l]^{k_2} \ar[r]^{k_3}& X+F
} 
\\
\text{2nd layer:}&
\xymatrix{
Y+XP \ar@/^1pc/[r]^{k_4}
&  Y.XP \ar@/^1pc/[l]^{k_5} \ar[r]^{k_6}& YP+XP \\
&&\\
YP+XP \ar@/^1pc/[r]^{k_7} & YP.XP\ar@/^1pc/[l]^{k_8} \ar[r]^{k_9} & YPP+XP\\
&&\\
YPP+G \ar@/^1pc/[r]^{k_{10}} & YPP.G \ar@/^1pc/[l]^{k_{11}} \ar[r]^{k_{12}}& YP+G\\
&&\\
YP+F \ar@/^1pc/[r]^{k_{13}} & YP.G\ar@/^1pc/[l]^{k_{14}} \ar[r]^{k_{15}} & Y+G
}
\end{eqnarray*}

Michaelis-Menten
reduction is more subtle in this case due to the fact that a single 
substance, $XP$,  is both a substrate in one reaction and an enzyme in others.
This problem has been analysed in \cite{ventura08} and \cite{ventura13}.
In fact these papers treat more general sequences of coupled multiple
phosphorylation loops. The numerical results of \cite{ventura13} lead to
the conclusion that the periodic solutions of the mechanistic model seen in \cite{qiao07} are also
present in the reduced mechanistic model. They also show that the 
reduced mechanistic model often gives more accurate results than a 
phenomenological model even when the parameter $\epsilon$ is not very small. 

While the reduction procedure of \cite{ventura08} and \cite{ventura13} is 
close to being a direct generalization of the reduction to the MM system done
above to the case of the truncated Huang-Ferrell model it is not exactly
the same and it needs to be modified slightly to give a limit for 
$\epsilon\to 0$ which is well-behaved from the point of view of GSPT.
This modification will now be carried out. The notation is chosen to fit
that in the rest of the paper. The proteins in the first and second layers
are denoted by $X$ and $Y$ respectively. The kinase in the first layer is 
denoted by $E$ and the phosphatases in the first and second layers by $F$
and $G$. The basic unknowns are the concentrations of $X$, $XP$, $Y$, $YP$, 
$YPP$, $XE$, $XP.F$, $Y.XP$, $YP.XP$, $YP.G$, $YPP.G$, $E$, $F$ and $G$. 
The conservation laws for the three enzymes $E$, $F$ and $G$ and the two
substrates $X$ and $Y$ can be used to eliminate five of the variables. A
key idea in \cite{ventura08} is to introduce a new variable 
$x_*=x_{XP}+x_{Y.XP}$, which leads to a cancellation in one of the evolution 
equations. New variables are introduced by rescaling in a way similar to that 
done for the dual futile cycle above. They are defined by the relations
$x_*=\epsilon\tilde x_*$, $x_{EX}=\epsilon^2\tilde x_{EX}$, 
$x_{F.XP}=\epsilon^2\tilde x_{F.XP}$, $x_{XP.Y}=\epsilon\tilde x_{XP.Y}$, 
$x_{XP.YP}=\epsilon\tilde x_{XP.YP}$, $x_{G.YP}=\epsilon\tilde x_{G.YP}$, 
$x_{G.YPP}=\epsilon\tilde x_{G.YPP}$, $x_E=\epsilon^2\tilde x_E$,
$x_F=\epsilon^2\tilde x_F$, $x_G=\epsilon\tilde x_G$, $\tau=\epsilon t$.
In addition, and this is the difference to what was done in \cite{ventura08},
two of the reaction constants are rescaled, $\tilde k_1=k_1\epsilon$ and 
$\tilde k_4=k_4\epsilon$. These are the reaction constants for the binding of
the enzymes $E$ and $F$ to their substrates. The result is the system
\begin{eqnarray}
&&\frac{d\tilde x_*}{d\tau}=k_3 \tilde x_{EX}
-\tilde k_4 (\tilde x_*-\tilde x_{XP.Y}-\tilde x_{XP.YP})\tilde x_F
+k_5 \tilde x_{F.XP}\\
&&\frac{d\tilde x_{Y}}{d\tau}=
-k_7 \tilde x_{Y}(\tilde x_*-\tilde x_{XP.Y}-\tilde x_{XP.YP})
+k_8 \tilde x_{XP.Y}+k_{12}\tilde x_{G.YP},\\
&&\frac{d\tilde x_{YPP}}{d\tau}=k_{15}\tilde x_{XP.YP}
-k_{16}\tilde x_{YPP}\tilde x_G+k_{17}\tilde x_{G.YPP},\\
&&\epsilon\frac{d\tilde x_{EX}}{d\tau}=\tilde k_1 \tilde x_{X}\tilde x_{E}
-k_2 \tilde x_{EX}-k_3 \tilde x_{EX},\\
&&\epsilon\frac{d\tilde x_{F.XP}}{d\tau}=
\tilde k_4 (\tilde x_*-\tilde x_{XP.Y}-\tilde x_{XP.YP})\tilde x_F
-k_5 \tilde x_{F.XP}-k_6 \tilde x_{F.XP},\\
&&\epsilon\frac{d\tilde x_{XP.Y}}{d\tau}=
k_7 \tilde x_{Y}(\tilde x_*-\tilde x_{XP.Y}-\tilde x_{XP.YP})-k_8 \tilde x_{XP.Y}
-k_9 \tilde x_{XP.Y},\\
&&\epsilon\frac{d\tilde x_{XP.YP}}{d\tau}=
k_{13}\tilde x_{YP}(\tilde x_*-\tilde x_{XP.Y}-\tilde x_{XP.YP})
-k_{14}\tilde x_{XP.YP}-k_{15}\tilde x_{XP.YP},\\
&&\epsilon\frac{d\tilde x_{G.YP}}{d\tau}=k_{10}\tilde x_{YP}\tilde x_G
-k_{11}\tilde x_{G.YP}-k_{12}\tilde x_{G.YP},\\
&&\epsilon\frac{d\tilde x_{G.YPP}}{d\tau}=k_{16}\tilde x_{YPP}\tilde x_G
-k_{17}\tilde x_{G.YPP}-k_{18}\tilde x_{G.YPP}.
\end{eqnarray}
It is then possible to pass to the limit $\epsilon\to 0$. Carrying out  
steps analogous to those done in the case of the dual futile cycle 
in a routine way gives
\begin{eqnarray}
&&\frac{d\tilde x_*}{d\tau}=\frac{c_1\tilde x_X}{1+d_1\tilde x_X}
-\frac{c_2\tilde x_*}{1+d_2\tilde x_*
+b_1\tilde x_Y+b_2\tilde x_{YP}},\label{mapkmm1}\\
&&\frac{d\tilde x_{Y}}{d\tau}=-\frac{a_1\tilde x_*\tilde x_Y}
{1+b_1\tilde x_Y+b_2\tilde x_{YP}}
+\frac{a_2\tilde x_{YP}}{1+b_3\tilde x_{YP}+b_4\tilde x_{YPP}},\label{mapkmm2}\\
&&\frac{d\tilde x_{YPP}}{d\tau}=\frac{a_3\tilde x_*\tilde x_{YP}}
{1+b_1\tilde x_Y+b_2\tilde x_{YP}}
-\frac{a_4\tilde x_{YPP}}{1+b_3\tilde x_{YP}+b_4\tilde x_{YPP}}.\label{mapkmm3}
\end{eqnarray}
The constants in this system can be expressed explicitly in terms of
the parameters in the full system. To get a closed system the conservation
laws for the total amounts of the substrates $X$ and $Y$ must be used. 

This system may be compared with one extracted from the system of
\cite{kholodenko00} in a way similar to that done for the dual futile cycle 
above. This time we take the equations for the substances in the first two
layers and remove the explicit feedback. 
This is done by setting the concentrations $[MAPK]$, $[MAPK-P]$ and $[MAPK-PP]$ to zero. Then the equations for those quantities are satisfied identically and the dependence of the first reaction rate $v_1$ in \cite{kholodenko00} on $[MAPK-PP]$ disappears. 
In the resulting system the evolution equations for the second layer substrates
$[MKK]$ and $[MKK-PP]$ are independent of the variables involving the  third layer substrate $[MAPK]$. This 
is in contrast to the system (\ref{mapkmm1})-(\ref{mapkmm3}) where there is
an intrinsic negative feedback loop. For if the right hand sides of these
equations are denoted by $f_i, i=1,2,3,$ then 
$\partial f_1/\partial\tilde x_{YPP}<0$ while 
$\partial f_3/\partial\tilde x_*>0$. Multiplying the two different signs of 
these derivatives gives the claimed negative feedback. (For background on
the defintion and significance of feedback loops the reader is referred to
\cite{sontag07}.) In the equations
extracted from the system of \cite{kholodenko00}, on the other hand, we have
$\partial f_1/\partial[MK-PP]=0$ and there is no intrinsic feedback. 
According to common heuristics the negative feedback loop in 
(\ref{mapkmm1})-(\ref{mapkmm3}) can be seen as a possible source of 
oscillations. The interaction not captured in the phenomenological description 
is what is called sequestration. In certain circumstances a lot of the substance
$XP$ is bound to its substrates $Y$ and $YP$ and so is not available to
take part in the reaction converting it back to $X$ in the first layer.

The equations (\ref{mapkmm1})-(\ref{mapkmm3}) are a reduced system for 
the truncated Huang-Ferrell model in the sense of GSPT. The transverse 
eigenvalues (as defined in \ref{transeigen}) have negative real parts. In contrast to the system arising
for the dual futile cycle the right hand side of the equations entering 
the computation of the transverse eigenvalues is no
longer affine. However the only nonlinear terms are in the equation for
$\tilde x_{F.XP}$ and the derivatives of these terms do not influence the
eigenvalues. All the other terms are either diagonal or belong to $2\times 2$
blocks which are similar to those already treated in the case of the dual 
futile cycle. Hence if it were possible to show that 
(\ref{mapkmm1})-(\ref{mapkmm3}) has a hyperbolic stable periodic solution 
then it would follow immediately that for $\epsilon$ sufficiently small the 
full system also has a periodic solution. This is because hyperbolic stable 
periodic solutions persist under regular perturbations of a dynamical system. 
Thus the task of proving the existence of the periodic solutions of the 
truncated Huang-Ferrell model found numerically in \cite{qiao07} reduces to 
proving an analogous statement for the reduced system. Since the former
has dimension nine, even after exploiting all conservation laws, and the
latter has dimension three, this could be a significant advantage. The 
question of existence of periodic solutions of the reduced system will 
be investigated elsewhere. 

Another interesting question concerns the dependence of the results of 
Theorems \ref{bistabMM} and \ref{bistabMM-MA} on the modelling by 
mass-action kinetics. A crucial argument in the present paper consists in 
verifying the conditions for a cusp bifurcation to take place. It is already 
technical for the mass-action kinetics where equilibria and derivatives can be 
computed explicitly. On the one hand, such computations seem difficult with a 
more general model for the kinetics of the chemical reactions. But on the 
other hand, the fact that the evolution equations have a polynomial reaction rates is not essential. Models with more general right hand side could 
lead to similar results. Furthermore, we conjecture that the assumption 
\eqref{common} is not a necessary condition, but only a convenience to 
simplify the computations.

\vskip 20pt\noindent
{\it Acknowledgements} ADR thanks David Angeli, Carsten Conradi, Stefan
Legewie, Jacques Sepulchre and Peter Szmolyan for helpful conversations and 
correspondence.

\end{document}